\newtheorem{remark}{Remark}
\newtheorem{proposition}{Proposition}
\title{Analysis of the Effects of Curvature on the Solutions
  of Shallow Water Equations}
\author{Stelian Ion$^*$, Dorin Marinescu$^*$, Stefan-Gicu
  Cruceanu\footnote{{``Gheorghe Mihoc - Caius Iacob''
      Institute of Mathematical Statistics and Applied
      Mathematics, Romanian Academy, 050711 Bucharest,
      Romania, {\tt emails: ro\_diff@yahoo.com,
        marinescu.dorin@ismma.ro, stefan.cruceanu@ismma.ro}.}}}
\date{}
\begin{document}
\maketitle

\begin{abstract}
  The most used form of the Shallow Water Equations doesn't
  take into account the variation of the curvature of the
  base flow surface.  In this paper, we compare the
  theoretical and numerical solutions of the standard model
  and with the solutions of an extended model and for a
  large class of base flow surfaces.  We find that the
  solution of the standard model is still a good
  approximation for the extended model for many real life
  applications.\\
  {\bf Keywords:} multiphysics, complex dynamics, PDEs,
  numerical simulation, stationary solution.\\
  {\bf 2020 MSC:} 35Q35, 35L60, 76-10, 53Z05.
\end{abstract}

\section{Introduction}
For a large number of applications, especially for the ones
in hydrology
\cite{bouchut-book,COZZOLINO201883,dooge,Hsu2019}, the
mathematical model of the water flow on the soil surface is
given by (S-SWE)
\begin{equation}
  \begin{split}
    \displaystyle\frac{\partial h}{\partial t}+\partial_j\left(h v^j\right) &=\mathfrak{M},\\
    \displaystyle\frac{\partial (h v_i)}{\partial t}+\partial_j\left(h v_iv^j\right)+h\partial_iw &=\mathfrak{F}_i, \quad i=1,2,
  \end{split}
  \label{s-swe} 
\end{equation}
where $h(t,\boldsymbol{x})$ is the water depth and
$v_i(t,\boldsymbol{x})$ and $v^i(t,\boldsymbol{x})$ are the
covariant and contravariant components of the water velocity
$\boldsymbol{v}(t,\boldsymbol{x})$.  We used here the
Einstein summation convention for writing the model i
n this
compact form.  The potential of the water level is
\begin{equation*}
  w = g \left[ z(\boldsymbol{x}) + h(t,\boldsymbol{x}) \right],
\end{equation*}
where $z(\boldsymbol{x})$ is the altitude of the soil
surface and $g$ represents the gravitational acceleration.
The contribution of rain and infiltration to the water mass
balance is taken into consideration through $\mathfrak{M}$,
while $\mathfrak{F}_i$ quantify the rates of momentum
production (they can include plant cover resistance and
fluid-soil friction, for example).  

\section{Shallow Water Equations for Surface with Variable Curvature, E-SWE}
If the base surface of the flow exhibits different
orientations or moderate variations of the geometrical
properties of its metrics, then it is convenient (at least
from a theoretical point of view) to work with a surface
base coordinate system.  Let us assume that the base flow
surface admits a parametric representation of the form
\begin{equation*}
  x^i=b^i(y^1,y^2), \quad i=\overline{1,3}, \quad \left(y^1,y^2\right)\in D\subset\mathbb{R}^2.
\end{equation*}
Using this representation of base flow surface, one
introduces a coordinate system in $\mathbb{R}^3$ by
\begin{equation*}
  x^i=b^i(y^1,y^2)+y^3\nu^i, \quad y^3\in(0,\varepsilon),
\end{equation*}
where $\boldsymbol{\nu}$ is the unitary normal vector to the
surface and gravitational upward oriented.  In this new
coordinate system, using a hydrostatic approximation of the
pressure field, the shallow water equations with curvature
E-SWE read as \cite{sds-ffonhv-arxiv}
\begin{equation}
  \begin{split}
    \displaystyle\frac{\partial (\beta h)}{\partial t} + \partial_a\left(\beta hv^a\right)
    &=\beta\mathfrak{M},\\
    \displaystyle\frac{\partial (h\beta v^c)}{\partial t} +
    \partial_a\left(\beta h v^c v^a\right) + h\beta\gamma^{c}_{ab} v^a v^b + 
    h\beta\beta^{ca}\partial_a{w}
    &= \beta\mathfrak{F}^c,
  \end{split}
  \label{e-swe}
\end{equation}
where $\{v^a\}_{a=1,2}$ are the contravariant components of
the velocity field $\boldsymbol{v}$, $\beta$ is the area
element of the surface, $\{\beta^{ab}\}_{a,b=1,2}$ are the
contravariant components of the metric tensor of the
surface, and $w=g\left[b^3(y^1,y^2)+h\nu^3(y^1,y^2)\right]$.
Christoffel symbols $\gamma_{\cdot\, \cdot}^{\cdot}$ and the
normal vector $\boldsymbol{\nu}$ intermediate the influence
of the surface curvature on the solution of the mathematical
model.  It is assumed that the values of the surface
curvature are not so high to preclude the hydrostatic
approximation of the pressure of the models.  For surfaces
with high values of curvature, the hydrostatic approximation
of the pressure is coarse and the models must include the
curvature tensor of the surface. For such cases, one can see
\cite{dressler, dewals, Zerihun, bergercarey}.

The shallow water equations E-SWE extend the equations
\mbox{S-SWE} (also known as standard shallow water
equations) which are commonly used from cases of almost flat
surfaces to cases of arbitrary surfaces.  The \mbox{E-SWE}
is an intermediate model between \mbox{S-SWE} and the ones
using the curvature tensor of the surface.  In the
\mbox{S-SWE} model, the variation of the geometrical
characteristics of the surface is taken into account only
through the gradient of the surface.  In addition to the
gradient of the surface, the E-SWE model also includes the
gradient of the unitary normal vector to the surface which
is related to the curvature of the surface.

It is important to know the effects of the curvature of the
base flow surface on the solutions of the two models.  For
the S-SWE model, the variation of the geometrical
characteristics of the surface is taken into account only
through the gradient of the surface and since the E-SWE
model explicitly includes terms related to the curvature of
the surface, the following question rises: is the gradient
of the surface sufficient to catch the main effect of the
curvature on the flow dynamics or does one need new terms
for the model to increase its accuracy?  It is not easy to
give a complete answer to this question, but we try here to
partially answer for a class of surfaces that are of
interest in practical applications.  We analyze the
theoretical solutions of the two models for the case of
steady flow on radial symmetric surface.  In addition to
this theoretical problem, we compare the numerical solution
of the S-SWE model with the analytic solution of
(\ref{e-swe}).  It is more difficult to numerical integrate
the E-SWE than the S-SWE model, and thus it is valuable to
know if the numerical solution of S-SWE can be still used as
a quantitative approximation of the real flow on a
non-planar surfaces.

\section{Effects of Variation of the Geometrical
  Characteristics of the Surface}
\label{section3}
Let us firstly point out that both models can be obtained
from Navier-Stokes equation by an asymptotic analysis.  The
S-SWE model can be alternatively obtained by considering the
surface base coordinate system to be a horizontal plane
$x^3=0$ or the base surface of the flow but assuming that
$\nu^3=1$.  In this sense, one can consider the S-SWE model
as simplified version of the E-SWE model.  Besides the two
mechanical quantities, water depth $h$ and water velocity
$\boldsymbol{v}$ describing the flow dynamics (they are also
the main unknowns of shallow water equations), there is
another important mechanical quantity: the energy of the
fluid.  This quantity obeys a conservative equation that is
derived from the primary equations (the mass conservation
and the linear momentum balance equations).  For simplicity,
we assume that there is no mass exchange and the frictional
force can be neglected, i.e. $\mathfrak{M}=0$ and
$\boldsymbol{\mathfrak{F}}=\boldsymbol{0}$.  With these
assumptions, the conservative equation of the energy
${\cal E}$ for the \mbox{E-SWE} model read as:
\begin{equation}
  \displaystyle\frac{\partial h\beta{\cal E}^{*}}{\partial t}+
  \partial_a\left(h\beta v^a {\cal E}\right)=0,
  \label{e_swe-energy}
\end{equation}
where
\begin{equation*}
  {\cal E}^{*}:=\frac{1}{2}|\boldsymbol{v}|^2+g(b^3+\frac{h}{2}\nu^3),\quad 
  {\cal E}:=\frac{1}{2}|\boldsymbol{v}|^2+g(b^3+h\nu^3).
\end{equation*}
Recall that $b^3(y^1,y^2)$ stands for the altitude of the
base flow surface.  The conservative equation of the energy
for the \mbox{S-SWE} model is given by (\ref{e_swe-energy})
with $\nu^3=1$.

In what follows, we analyze the flow on a radial symmetric
surface.  Such kind of surface admits a parametric
representation as:
\begin{equation}
  \left\{
    \begin{array}{l}
      x^1=r\cos{\theta}\\
      x^2=r\sin{\theta}\\
      x^3=f(r)
    \end{array}
  \right. ,
  \label{conic_surface}
\end{equation}
where $\theta\in[0,2\pi)$, $r\in [r_0,r_1]$.  We identify
the distance coordinate $r$ as the first component $y^1$,
and the polar angle $\theta$ as the second component $y^2$.

The unit normal to the surface is given by
\begin{equation}
  \boldsymbol{\nu} = \nu
  \left(
    \begin{array}{c}
      -f^{\prime}\cos{\theta}\\
      -f^{\prime}\sin{\theta}\\
      1\\
    \end{array}
  \right), \quad
  \nu=\displaystyle\frac{1}{\sqrt{1+(f^\prime)^2}}.
  \label{eq_unitary_normal}
\end{equation}
The covariant components of the metric and the elementary
area element are given by
\begin{equation}
  \beta_{\cdot\cdot}=
  \left(
    \begin{array}{cc}
      1+(f^\prime)^2&0\\
      0&r^2
    \end{array}
  \right), \quad
  \beta=r\sqrt{1+(f^\prime)^2}.
  \label{eq_beta}
\end{equation}
The components of the Christoffel symbol
$\gamma_{\cdot\cdot}^\cdot$ are given by
\begin{equation}
  \begin{array}{lll}
    \gamma^1_{11}=\displaystyle\frac{f^\prime f^{\prime\prime}}{\sqrt{1+(f^\prime)^2}},
    &\gamma^1_{12}=\gamma^1_{21}=0,
    &\gamma^1_{11}=\displaystyle\frac{-r}{1+(f^\prime)^2},\\
    \gamma^2_{11}=0,
    &\gamma^2_{12}= \gamma^2_{21}=\displaystyle\frac{1}{r},
    &\gamma^2_{22}=0.
  \end{array}
  \label{eq_ChristoffelSymbol}
\end{equation}

For such radial symmetric surface and for proper boundary
conditions, both models admit a 1-D type solution, $h$ and
$\boldsymbol{v}$, spatially depending only on the distance
$r$ and with $v^2=0$.

A very particular solution of the two models for the case of
a radial symmetric surface is the steady and axial symmetric
one,
$$v^1=u(r), \quad v^2=0, \quad h=h(r).$$
In the absence of mass source and frictional terms, the mass
conservation and energy conservation equations lead to
\begin{equation}
  \begin{split}
    \beta h u & =q_w,\\
    \frac{1}{2}|\boldsymbol{v}|^2+g\left(f(r)+h \nu\right) & =q_e,
  \end{split}
  \label{sol-a.1}
\end{equation}
where $q_w$ and $q_e$ are two constants, and $\nu$ and
$\beta$ are given by (\ref{eq_unitary_normal}) and
(\ref{eq_beta}), respectively. We remind the reader that
$\nu=1$ and $\beta=r$ for the S-SWE model.  Note that the
presence of the curvature in the E-SWE model is reflected by
the variation of $\nu$ with respect to $r$.  Taken into
account that
$$|\boldsymbol{v}|^2=\beta_{11}v^1v^1,$$
one can obtain an equation for $h$, the {\it h-profile
  equation}:
\begin{equation}
  g\nu h^3-(q_e-gf(r))h^2+\displaystyle\frac{q^2_w}{2r^2}=0.
  \label{h-eq-01}
\end{equation}
Solving the {\it h-profile equation} for each
$r\in[r_0,r_1]$, one obtains a solution
\begin{equation}
  \begin{array}{l}
    h=h(r; q_w, q_e),\\
    u=u(r;q_w, q_e),
  \end{array}
  \label{e-swe-sol}
\end{equation}
for each of the two models, where the two constants $q_w$
and $q_e$ are determined by the boundary data of
$\{h(r), u(r)\}$.

Although one can write an analytic expression for the
solution (\ref{e-swe-sol}), this would be difficult to be
handled for practical purposes.  This is the reason why it
is preferable to numerically solve for $h$ the {\it
  h-profile equation} as accurate as one wants.  We will
call it an analytic solution of the model and it will be the
base on which we will perform the following analysis.

Let us define the polynomials of the $h$-profile equations
for the two models (\ref{s-swe}) and (\ref{e-swe}),
respectively:
\begin{equation}
  {\cal P}^S_r (h) := g h^3 - (\tilde{q}_e-gf(r))h^2 + \displaystyle\frac{q_w^{\;2}}{2r^2},
  \label{eq_PS}
\end{equation}
\begin{equation}
  {\cal P}^E_r (h) := g\nu h^3 - (q_e-gf(r))h^2 + \displaystyle\frac{q_w^2}{2r^2},
  \label{eq_PE}
\end{equation}
where $q_w=rh|\boldsymbol{v}|$ and $\tilde{q}_e$ is defined
as in (\ref{sol-a.1}) for $\nu = 1$.  Note that each of the
above polynomials has one negative root.  For the case of
three real roots, the other two are positive.

For any fixed $r$, let us denote by $h^M(r)$ the water depth
solution of the two models, $M\in\{E,S\}$.  In what follows,
our goal is to obtain an estimation of how large is the
influence of $\nu$ on the difference between $h^S(r)$ and
$h^E(r)$.  We consider the case of a monotone decreasing
surface function $f(r)$ with $f(r_0)>f(r_1)$ and a boundary
data given at $r=r_0$
\begin{equation}
  h(r_0)=h_0, \quad u(r_0)=u_0.
  \label{eq_boundary}
\end{equation}
For the sake of simplicity, we consider that $\nu (r_0)=1$.

\begin{proposition}
  If $\displaystyle\frac{u_0^2}{gh_0}>1$ and
  $f^{\prime}(r) \leq 0$, then:
  \begin{itemize}
  \item[{\rm (1)}] the solution $h^M(r)$ of ${\cal P}^M_r(h)=0$
    exists for $M\in\{E,S\}$,
  \item[{\rm (2)}] $h^E(r) \leq h^S(r)$,
  \item[{\rm (3)}] $\exists c>0$
    s.t. $h^S(r)-h^E(r) \leq c\cdot (h^E(r))^2 (1-\nu)$,
  \item[{\rm (4)}]
    $h^S(r)-h^E(r) \geq L(r)\cdot (h^S(r))^2 (1-\nu)$, where
    \begin{equation*}
      L(r) = \frac{1}{3\nu(r)\left(\frac{2\eta(r)}{3g\nu(r)}-h^Q(r)\right)} > 0, \quad
      \eta(r) := q_e-gf(r), \quad
      h^Q(r) := \frac{q_w}{r\sqrt{2\eta(r)}},
    \end{equation*}
  \end{itemize}
  for all $r\in[r_0,r_1]$.
  \label{propozitie}
\end{proposition}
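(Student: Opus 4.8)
The plan is to reduce both models to a single family of cubics differing only in the leading coefficient. Since the boundary data are prescribed at $r_0$ with $\nu(r_0)=1$ (so $f'(r_0)=0$), the two discharges coincide, $q_w=r_0h_0u_0$, and the two energy constants coincide as well, $\tilde q_e=q_e=\tfrac12 u_0^2+g f(r_0)+gh_0$. Hence both $\mathcal P^S_r$ and $\mathcal P^E_r$ share the same $\eta(r):=q_e-gf(r)$ and the same constant term $c_0:=q_w^2/(2r^2)$, and differ only in the cubic coefficient, $g$ versus $g\nu$. Because $f$ is decreasing, $\eta(r)\ge\eta(r_0)=\tfrac12 u_0^2+gh_0>0$ for all $r$. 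The hypothesis $u_0^2>gh_0$ says that the boundary state is supercritical, so throughout I take $h^M(r)$ to be the smaller (supercritical) positive root of $\mathcal P^M_r$, the branch selected by continuity from $h^M(r_0)=h_0$.

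For (1), recall that a cubic $a h^3-\eta h^2+c_0$ with $a,\eta,c_0>0$ has three distinct real roots iff its local minimum (attained at $h=2\eta/(3a)$) is negative, i.e. iff $4\eta^3/(27a^2)>c_0$. For $\mathcal P^S$ this reads $4\eta^3/(27g^2)>c_0$; it holds at $r_0$ by supercriticality, and since $\eta$ increases and $c_0$ decreases in $r$ it persists on all of $[r_0,r_1]$. For $\mathcal P^E$ the same condition carries an extra factor $1/\nu^2\ge1$, so it holds a fortiori. This furnishes the two positive roots, hence $h^E,h^S$, for every $r$. For (2), I would use monotonicity of the small root in the leading coefficient: differentiating $a h_-^3-\eta h_-^2+c_0=0$ gives $\partial_a h_- = h_-^2/(2\eta-3ah_-)$, and since the small root satisfies $h_-<2\eta/(3a)$ the denominator is positive, so $h_-$ is increasing in $a$; as $g\nu\le g$ this yields $h^E\le h^S$.

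The engine for (3) and (4) is an exact formula for $\Delta:=h^S-h^E$. Subtracting the two root equations and regrouping the cubic terms in the two possible ways (factoring out either $g$ or $g\nu$) gives
\[
  \Delta=\frac{g(1-\nu)(h^E)^3}{D}=\frac{g(1-\nu)(h^S)^3}{\tilde D},\quad
  D=\eta(h^S+h^E)-gQ,\quad \tilde D=\eta(h^S+h^E)-g\nu Q,
\]
where $Q=(h^S)^2+h^Sh^E+(h^E)^2$. By (2), $\Delta>0$ for $r>r_0$ (strictly, since $\nu<1$ there), so both $D$ and $\tilde D$ are positive; at $r=r_0$ all quantities degenerate and both inequalities hold as $0\le 0$. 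From $c_0=\eta(h^Q)^2=(h^E)^2(\eta-g\nu h^E)=(h^S)^2(\eta-gh^S)$ one reads off $h^Q\le h^E\le h^S$ (the last inequality from (2)) and $h^E<2\eta/(3g\nu)$.

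For (3), write $\Delta=\bigl(gh^E/D\bigr)(h^E)^2(1-\nu)$. Since $D$ is continuous and positive on the compact interval $[r_0,r_1]$ (at $r_0$ one computes $D=h_0(u_0^2-gh_0)>0$), the quantity $gh^E/D$ is bounded, and $c:=\max_{r}gh^E/D$ works. For (4), write $\Delta=\bigl(gh^S/\tilde D\bigr)(h^S)^2(1-\nu)$ and note that the stated $L(r)$ equals $g/(2\eta-3g\nu h^Q)$, which is positive because $h^Q\le h^E<2\eta/(3g\nu)$. The claim $gh^S/\tilde D\ge L$ is then equivalent to $h^S(2\eta-3g\nu h^Q)\ge\tilde D$, and expanding the difference gives $\eta(h^S-h^E)+g\nu\bigl[(h^S)^2+h^Sh^E+(h^E)^2-3h^Sh^Q\bigr]$; the first term is nonnegative by (2), and the bracket is nonnegative since $(h^S)^2+(h^E)^2\ge 2h^Sh^E$ and $h^Q\le h^E$. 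I expect the main obstacle to be exactly this last bookkeeping: producing the two regroupings of $\Delta$ and matching the denominator of part (4) to the prescribed $L(r)$ through the identity $c_0=\eta(h^Q)^2$, while keeping track of the chain $h^Q\le h^E\le h^S$ that makes every sign come out right.
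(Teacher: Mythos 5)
Your argument is correct, and its skeleton is the paper's: both proofs exploit that $\mathcal{P}^S_r$ and $\mathcal{P}^E_r$ share $\eta(r)$ and the constant term $q_w^2/(2r^2)$ and differ only in the leading coefficient, both locate the relevant roots via $0<h^Q\le h^E\le h^S<2\eta/(3g)$, and both prove (3) and (4) by evaluating $\mathcal{P}^S$ at $h^E$ (resp.\ $\mathcal{P}^E$ at $h^S$) in two ways. The paper handles the second evaluation with the mean value theorem, obtaining $h^S-h^E=(1-\nu)(h^E)^3/\bigl(3\xi(h^S_\star-\xi)\bigr)$ with $\xi\in(h^E,h^S)$ and $h^S_\star=2\eta/(3g)$; your explicit factorization produces $D=\eta(h^S+h^E)-gQ$, which is exactly $-(\mathcal{P}^S)'(\xi)$, so the two identities coincide, and your verification that $gh^S/\tilde D\ge L$ mirrors the paper's use of $\xi<h^S$ and $\xi>h^Q$. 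The genuine differences: your discriminant-plus-monotonicity argument for (1) and the $\partial_a h_-$ computation for (2) are more explicit than the paper's sketch and its graphical comparison; conversely, in (3) the paper keeps track of $h^S<h_0$ and $h^S_\star(r)>h^S_\star(r_0)$ to extract the explicit constant $c=g/(u_0^2-gh_0)$ (the one behind $U_b$ in (\ref{eq_ub})), whereas your compactness constant is unspecified but equally valid for the ``$\exists c$'' claim. One small point to tighten: deducing $D,\tilde D>0$ from ``$\Delta>0$ because $\nu<1$ for $r>r_0$'' tacitly assumes $f'<0$ there; where $\nu=1$ the estimates are trivially $0\le 0$, or one can note directly that $D=\xi(2\eta-3g\xi)>0$ since $0<\xi\le h^S<2\eta/(3g)$.
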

\begin{proof}
  (1) Knowing that
  \begin{equation*}
    \tilde{q}_e = q_e = \frac{1}{2}u_0^2+g\left(f(r_0)+h_0\right)
  \end{equation*}
  holds for $\nu(r_0)=1$
  and denoting
  \begin{equation*}
    \nu^M(r):=\left\{
      \begin{array}{lcl}
        \nu(r), &{\rm if} &M=E\\
        1, &{\rm if} &M=S
      \end{array}
    \right. ,
  \end{equation*}
  one observes the following:
  \begin{itemize}
  \item[(a)]
    $\displaystyle\partial_h{\cal P}^{M}_r (h) = 3 g\nu^M(r) h^2-2\eta(r)h,$
  \item[(b)]
    $\displaystyle{\cal P}^{M}_{r_0} (h_0) = 0, \quad\forall M\in\{E,S\}$,
  \item[(c)]
    $\partial_r{\cal P}^{S}_r (h)=gf^{\prime}(r)h^2-\displaystyle\frac{q_w^2}{r^3}$.
  \end{itemize}

  \noindent
  Using properties (a), (b), (c), one can now prove that
  each of the polynomial equation ${\cal P}^{M}_r (h) = 0$
  has three real roots and since
  \begin{equation*}
    \displaystyle \frac{2\eta(r_0)}{3g} - h_0 = \frac{1}{3g}(u_0^2-gh_0) > 0,
  \end{equation*}
  one concludes that only one root is a continuous function
  of $r$ which approaches $h_0$ as $r\rightarrow r_0$.  We
  call that root the solution of shallow water equation
  model $M$ and we denote it by $h^M(r)$, where
  $M\in\{E,S\}$.

  \noindent
  (2) For each $M\in\{E,S\}$, let us denote by
  $h_{\star}^M(r)$ the nonzero root of the derivative of
  ${\cal P}_r^M$ with respect to $h$, i.e.
  \begin{equation*}
    h_{\star}^M(r) := \displaystyle\frac{2}{3g}\frac{\eta(r)}{\nu^M(r)}.
  \end{equation*}

  Observing that
  \begin{equation*}
    \begin{array}{c}
      h^S_{\star}(r) > h^S_{\star}(r_0) > h_0,\\
      \partial_r{\cal P}^{S}_r (h)<0,\quad {\cal P}^{E}(h)<{\cal P}^{S}(h), \quad \forall h>0,
    \end{array}
  \end{equation*}
  one can immediately conclude that
  \begin{equation*}
    h^E(r)<h^S(r)<h_0, \quad r\in(r_0,r_1),
  \end{equation*}
  see also Fig.~\ref{fig_1}.

  \noindent
  (3) Since
  \begin{equation*}
    {\cal P}^S (h^E) = {\cal P}^S (h^E) - {\cal P}^E (h^E) = g(1-\nu)(h^E)^3
  \end{equation*}
  and
  \begin{equation*}
    {\cal P}^S (h^E) = {\cal P}^S (h^E) - {\cal P}^S (h^S) = ({\cal P}^S)^\prime(\xi)(h^E-h^S),
  \end{equation*} 
  for some $\xi\in (h^E,h^S)$, then
  \begin{equation*}
    h^S-h^E = \displaystyle\frac{(1-\nu)(h^E)^3}{3\xi(h^S_{\star}-\xi)}.
  \end{equation*}
  As $h^E<\xi<h^S$ and
  $h^S_{\star}(r)-\xi>h^S_{\star}(r_0)-h_0$, one can now
  easily conclude that
  \begin{equation}
    h^S-h^E<(1-\nu)(h^E)^2\displaystyle\frac{g}{u_0^2-g h_0}.
    \label{eq_hS_hE_uppbound}
  \end{equation}

  \noindent
  (4) Observe that
  \begin{equation*}
    \eta(r) >0, \quad \forall r\in [r_0,r_1],
  \end{equation*}
  $h^Q(r)$ is the positive root of the polynomial
  \begin{equation}
    {\cal Q}_r(h):=-\eta(r)h^2+\frac{q_w^2}{2r^2},
    \label{eq_Qr}
  \end{equation}
  and that
  \begin{equation}
    {\cal P}_r^S(h) \geq {\cal P}_r^E(h) \geq {\cal Q}_r(h), \quad \forall h\geq 0.
    \label{ineg_PPQ}
  \end{equation}
  But ${\cal P}_r^S(\cdot)$, ${\cal P}_r^E(\cdot)$,
  ${\cal Q}_r(\cdot)$ are monotone decreasing on
  $\left [0,\frac{2\eta(r)}{3g}\right]$ and
  ${\cal Q}_r(h^Q) = {\cal P}_r^E(h^E) = {\cal P}_r^S(h^S) = 0$;
  thus (\ref{ineg_PPQ}) now gives
  \begin{equation}
    0 < h^Q(r) < h^E(r) \leq h^S(r) \leq \frac{2\eta(r)}{3g}, \quad \forall r\in[r_0,r_1].
    \label{bound_hS}
  \end{equation}
  Since
  \begin{equation*}
    {\cal P}^E (h^S) = {\cal P}^E (h^S) - {\cal P}^S (h^S) = g(\nu-1)(h^S)^3
  \end{equation*}
  and
  \begin{equation*}
    {\cal P}^E (h^S) = {\cal P}^E (h^S) - {\cal P}^E (h^E) = ({\cal P}^E)^\prime(\xi)(h^S-h^E),
  \end{equation*} 
  for some $\xi\in (h^E,h^S)$, then
  \begin{equation}
    h^S-h^E = \displaystyle\frac{(1-\nu)(h^S)^3}{3\nu\xi(h^E_{\star}-\xi)}.
    \label{eq_hS_hE_lowbound}
  \end{equation}
  The ``lower bound'' inequality from the proposition can
  now be easily obtained observing that $0<\xi<h^S$ and
  $0 < h^E_{\star}-\xi < h^E_{\star}-h^Q$.
\end{proof}

\begin{figure}[htbp]
  \centering
  \includegraphics[width=0.63\linewidth]{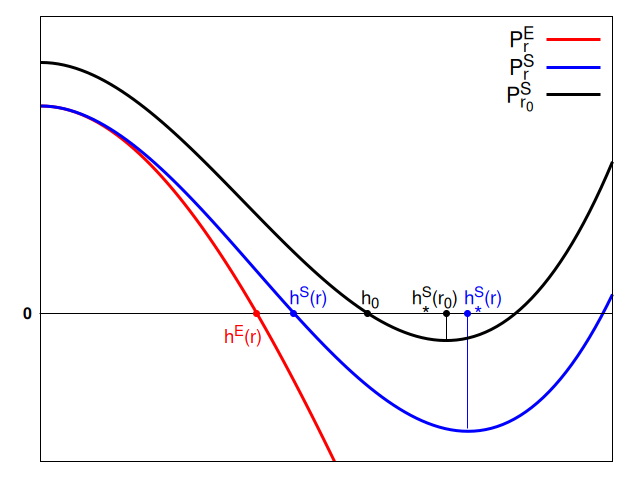}
  \caption{Graphical sketch for
    Proposition~\ref{propozitie}. The black curve is the
    graph of the polynomial function ${\cal P}^{E}_{r_0}$
    which coincides with ${\cal P}^{S}_{r_0}$.  The red and
    the blue curves are the graphs of ${\cal P}^{E}_{r}$ and
    ${\cal P}^{S}_{r}$ for a fixed $r$, respectively.}
  \label{fig_1}
\end{figure}

\begin{remark}
  If $\nu(r_0)<1$ then statement {\rm (1)} will continue to
  hold, statement {\rm (2)} may not be true along the entire
  interval $(r_0, r_1)$, and statement {\rm (3)} must be
  read as
  $$\left| h^S(r)-h^E(r) \right| \leq c \cdot (h^E(r))^2 (1-\nu).$$
\end{remark}
\begin{remark}
  For the case of monotone increasing profile $f^{\prime}>0$
  with boundary data given at $r=r_1$ and $\nu(r_1)=1$,
  statement {\rm (4)} continues to hold provided that the
  solution $h^S$ exists for any $r\in(r_0,r_1)$.
  Furthermore, if in addition $h^S$ remains upper bounded by
  $h_0$, then statement {\rm (3)} also continues to hold.
\end{remark}
In order to ``illustrate'' Proposition~\ref{propozitie} and
the remarks, we consider two particular cases of profiles
$f$.  Fig.~\ref{fig_upperbound} shows the graphs of the relative
error \mbox{$E_r:={\left(h^S-h^E\right)}/{h^E}$} and
\begin{equation}
  U_b:=\frac{h^E}{h_0}\frac{1-\nu}{Fr^2 -1}
  \label{eq_ub}
\end{equation}
that can be derived from (\ref{eq_hS_hE_uppbound}), where
$Fr:=u_0/{\sqrt{gh_0}}$ is the Froude number.  For the
conditions in Proposition~\ref{propozitie}, $U_b$ is an
upper bound for $E_r$.
\begin{figure}[htbp]
  \centering
  \begin{tabular}{cc}
    \includegraphics[width=0.38\linewidth]{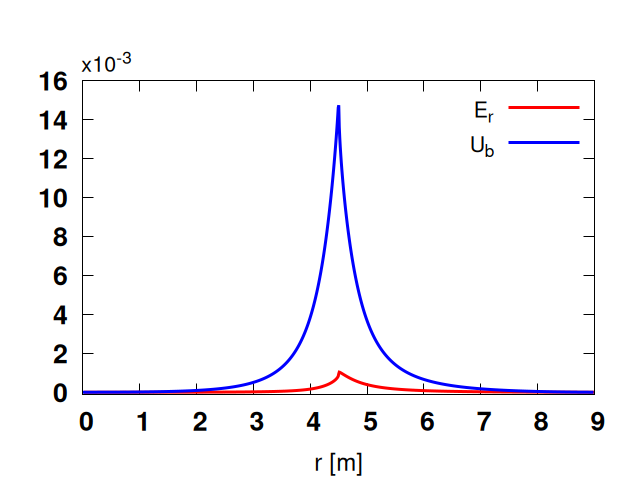}
    &\includegraphics[width=0.38\linewidth]{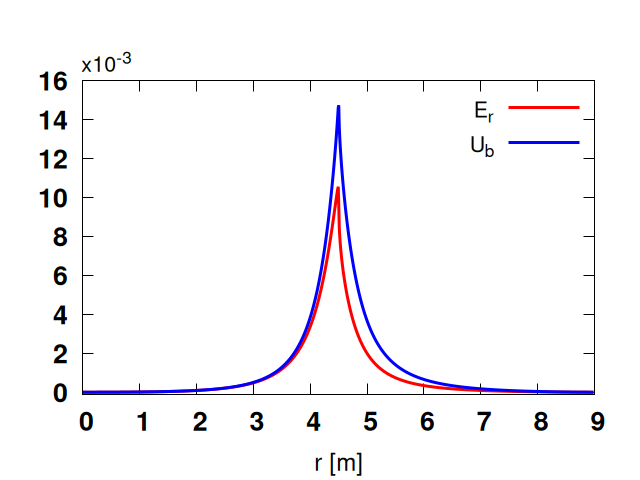}\\
    \includegraphics[width=0.38\linewidth]{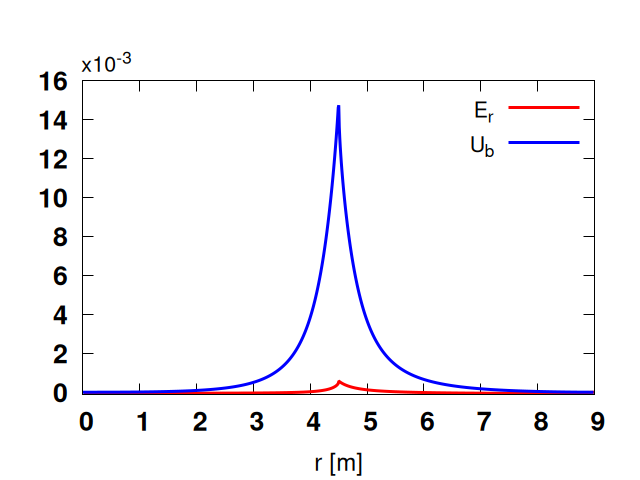}
    &\includegraphics[width=0.38\linewidth]{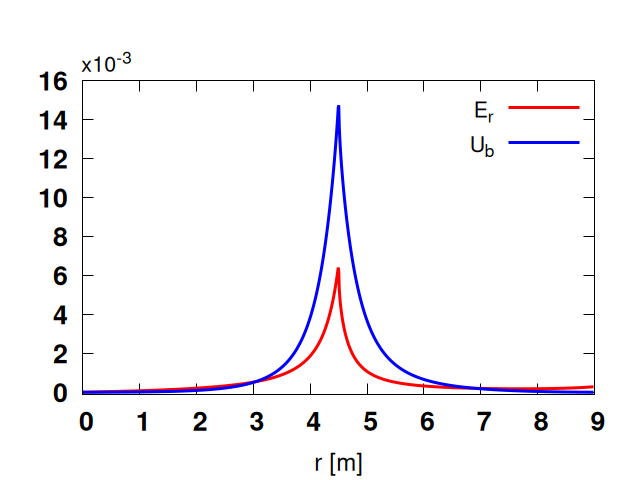}
  \end{tabular}
  \caption{Relative errors between $h^E$ and $h^S$ for
    surfaces generated by two different profiles $f$ from
    Fig.~\ref{fig_surfaces}.  The first column corresponds
    to a water flow on a type (III) surface, while the
    second column corresponds to a flow on a type (II)
    surface.  The profile $f$ has $\nu_0 = 1$ on the first
    row and $\nu_0 < 1$ on the second row.}
  \label{fig_upperbound}
\end{figure}

\noindent
Note that the hypotheses of Proposition~\ref{propozitie} are
fulfilled here only by the flow corresponding to the top
left figure.

\begin{remark}
  The estimation given by statement {\rm (3)} is mainly
  important for practical applications where the water depth
  has small values and it shows that the simplified model
  S-SWE is still able to reflect the influence of variation
  in the curvature of the base flow surface on the dynamics
  of water flow.
\end{remark}

\begin{remark}
  Although the difference between the solutions of the two
  models is generally small, the estimation given by
  statement {\rm (4)} emphasizes that this difference exists
  and it is not zero.
\end{remark}
Fig.~\ref{fig_lowerbound} reflects a particular case where
the difference between the two solutions $h^S$ and $h^E$ is
not so neglectable anymore.
\begin{figure}[htbp]
  \centering
  \begin{tabular}{cc}
    \includegraphics[width=0.41\linewidth]{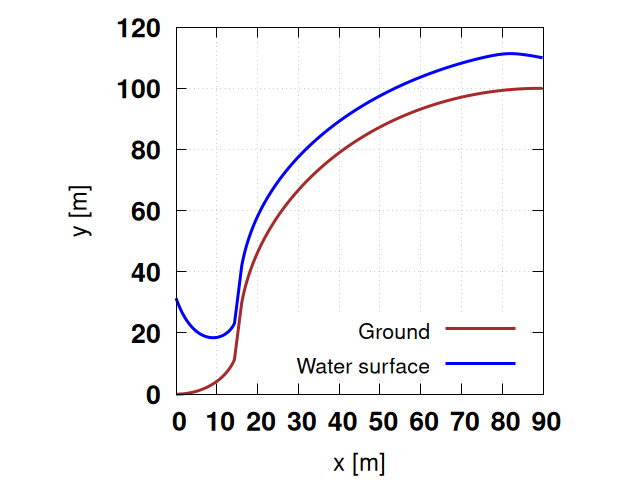}
    &\includegraphics[width=0.41\linewidth]{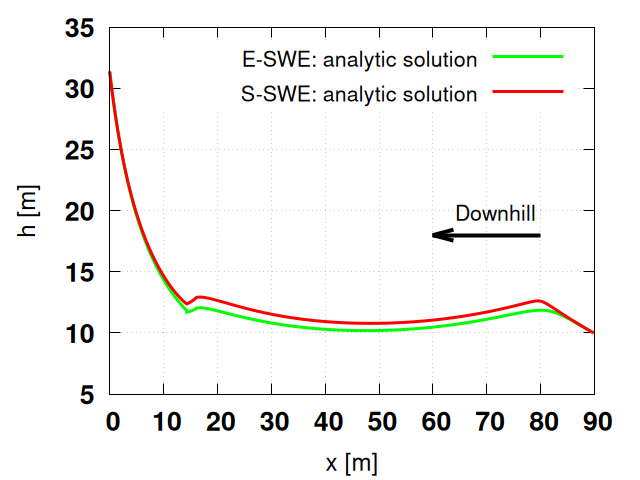}\\
    \includegraphics[width=0.41\linewidth]{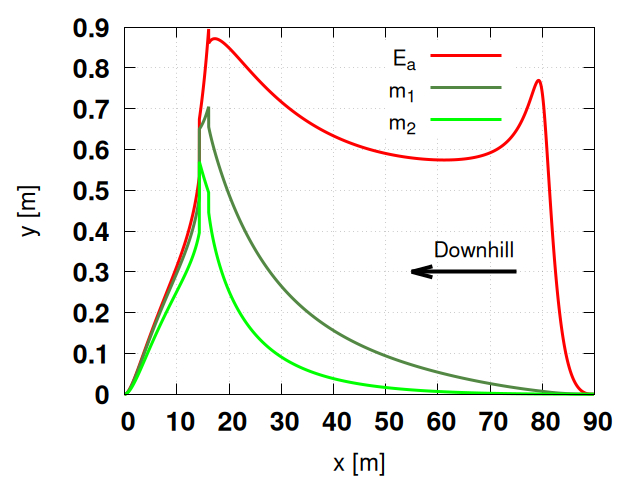}
    &\includegraphics[width=0.41\linewidth]{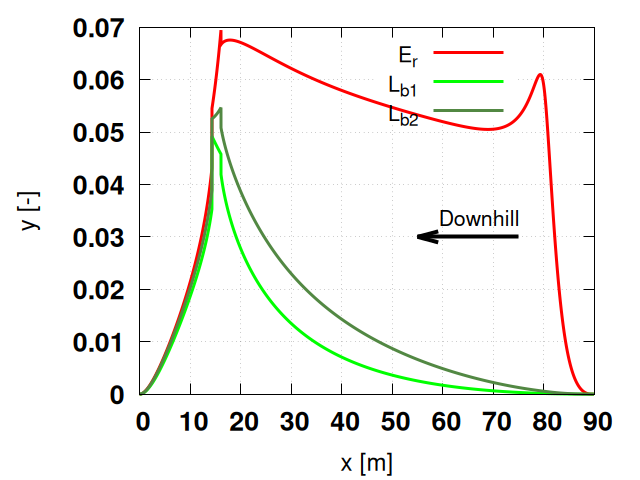}
  \end{tabular}
  \caption{The stationary water flow on a radial symmetric
    surface generated by two quarter circles.  The top left
    picture presents the profile $f$ with
    \mbox{$\nu(r_1)=1$} and the water surface of the steady
    state flow.  The top right picture illustrates the
    graphics of the water depth corresponding to the
    solutions of the two models.  The absolute and relative
    errors (between $h^E$ and $h^S$) and their lower bounds
    defined in (\ref{eq_lb_abs}) and (\ref{eq_lb_rel}) are
    pictured on the bottom row.}
  \label{fig_lowerbound}
\end{figure}

\noindent
It shows the graphs of
\begin{equation}
  E_a:=h^S-h^E, \quad
  m_{b1}:=L\cdot (h^S)^2 (1-\nu), \quad
  m_{b2}:=L\cdot (h^Q)^2 (1-\nu), \quad
  \label{eq_lb_abs}
\end{equation}
\begin{equation}
  E_r:={\left(h^S-h^E\right)}/{h^S}, \quad
  L_{b1}:=\frac{m_{b1}}{h^S}, \quad
  L_{b2}:=\frac{m_{b2}}{h^Q}.
  \label{eq_lb_rel}
\end{equation}
that can be derived from the last property of
Proposition~\ref{propozitie}.  For the conditions in
Proposition~\ref{propozitie}, $m_b$ and $L_b$ are lower
bounds for the absolute $E_a$ and relative $E_r$ errors,
respectively.

\section{Numerical Solutions}
For most real life problems, one does not know the analytic
solutions of the shallow water models as in
\cite{delestre,matskevich}, and as a consequence, one uses
the numerical solutions of S-SWE
\cite{COZZOLINO201883,Hsu2019,cell_dray,seguin,fayssal,nordic,liu,marche}.
The numerical methods developed to solve S-SWE have an
intrinsic mathematical interest and they provide solutions
for real problems coming from many other areas
\cite{virgil-conceptual, navaro, BRESCH20091}.  Taking this
into account, it is relevant (from the application point of
view) to compare the numerical solutions of S-SWE with the
exact ones of E-SWE.

As we pointed out in the previous sections, it is of main
interest to investigate the effects of the curvature
variation on the solution of shallow water equation.  We
noticed that both models give basically the same solutions
for a large class of surfaces.  However, one can find
certain surfaces leading to significant differences between
the solutions of the two models.

In many applications, especially in hydrological sciences,
one considers that the average inclination of the base flow
surface is the dominant factor that determines the water
dynamics.  But in real problems, the surface exhibits local
variation of its geometrical properties.  To illustrate the
influence of such local variation of the surface curvature,
we consider four types of surfaces as in
Fig.~\ref{fig_surfaces}, all having the same average slope.
Three surfaces have radial symmetry and the fourth one is
generated by the translation of a 1D profile.
\begin{figure}[ht!]
  \centering
  \begin{tabular}{cc}
    \includegraphics[width=0.49\linewidth]{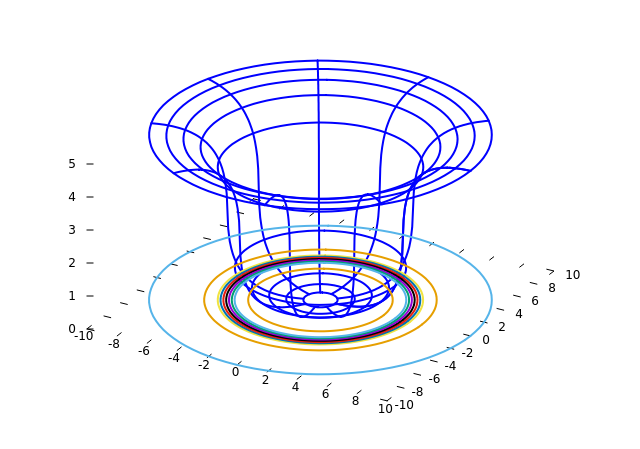}
    &\includegraphics[width=0.49\linewidth]{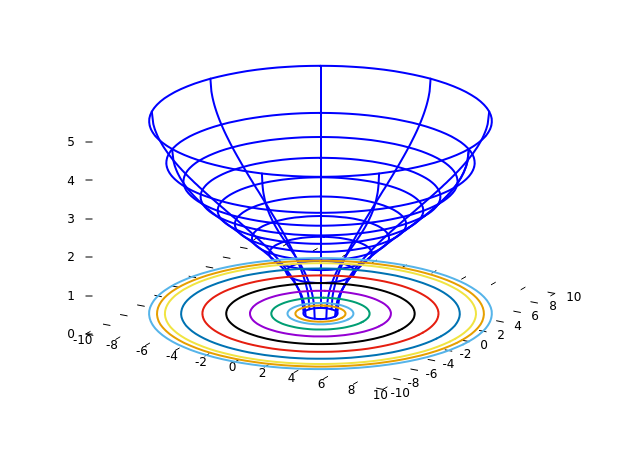}\\
    Crater type (I) &Crater type (II)\\
    \includegraphics[width=0.49\linewidth]{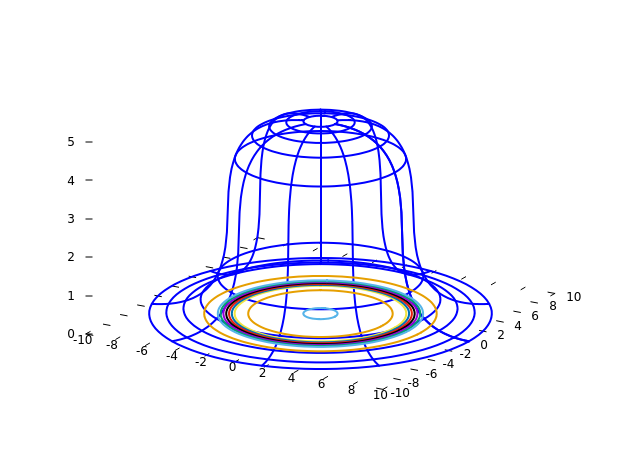}
    &\includegraphics[width=0.49\linewidth]{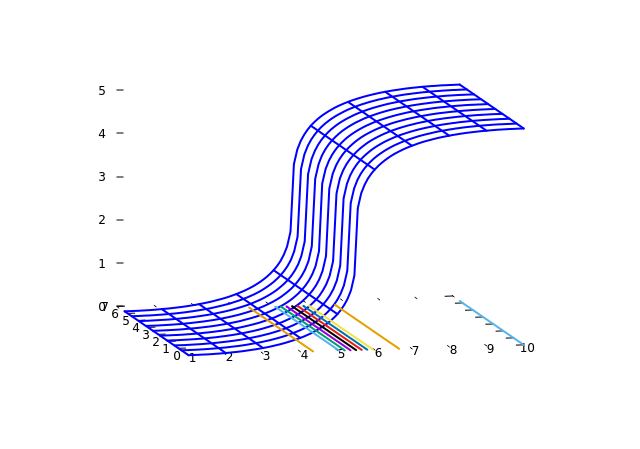}\\
    Hillock type (III) &Waterfall $1$D type (IV)
  \end{tabular}
  \caption{Four different types of surfaces $f$. All
    surfaces have the same height, $5$ m.  Surfaces I, III
    and IV are generated by functions of type
    $g(\xi)=\sqrt[3]{\xi}-\xi/3$, while surface II is
    generated by an $\arcsin$ function, applying convenient
    translations and dilatation of arguments. The three
    radial symmetric surfaces are defined for $r\in[1,10]$
    and the fourth surface for $x\in[1,10]$.}
  \label{fig_surfaces}
\end{figure}

The boundary conditions are the same for all four examples:
$h_0 = 0.1$ m and $Fr=5$ at the top of the surface and free
drainage the bottom.

The two left pictures of Fig.~\ref{fig_efectele_curburii}
(built on similar surfaces with the same line curvature with
respect to $\theta$) show the influence of the radial
curvature on the flow dynamics on crater type surfaces (I
and II) where water accumulates at the bottom.  We note that
the $h$ profiles for both surfaces remain close to each
other, while the $v$ profiles have significant differences.
The two right pictures of Fig.~\ref{fig_efectele_curburii}
show the influence of the $\theta$ line curvature on the
flow.  As opposed to the left, the $v$ profiles for all
three types of soil surfaces (I, III and IV) stay close to
each other, but there are considerable differences in the
$h$ profiles.
\begin{figure}[htbp]
  \centering
  \begin{tabular}{cc}
    \includegraphics[width=0.38\linewidth]{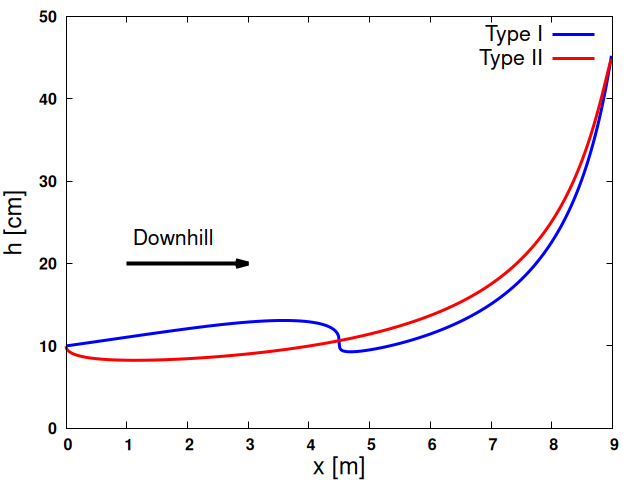}
    &\includegraphics[width=0.38\linewidth]{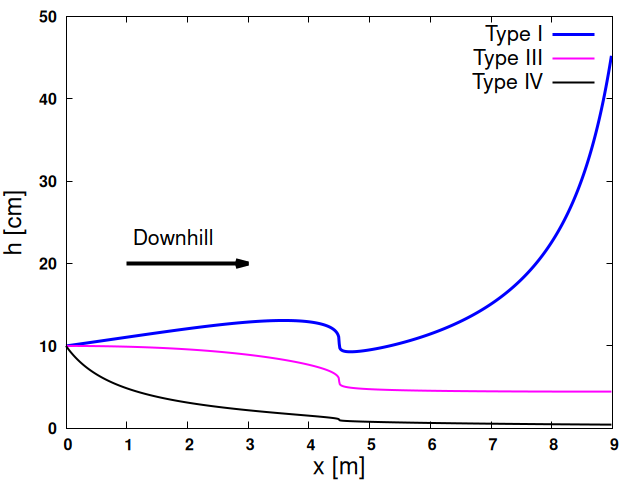}\\
    \includegraphics[width=0.38\linewidth]{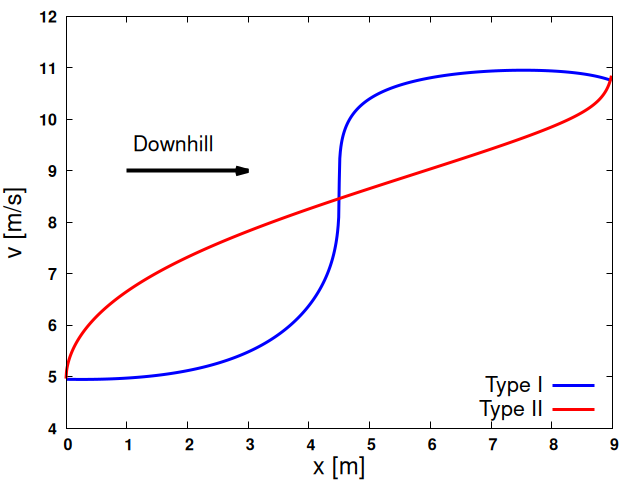}
    &\includegraphics[width=0.38\linewidth]{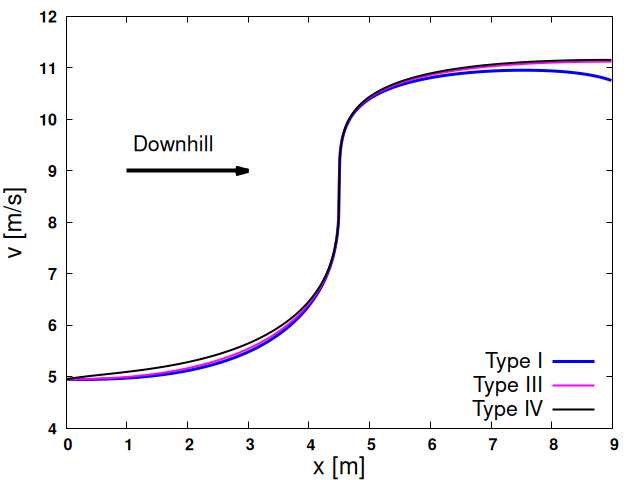}\\
  \end{tabular}
  \caption{Effects of the curvature on the flow distribution
    along four types of surfaces $f$.}
  \label{fig_efectele_curburii}
\end{figure}

Fig.~\ref{fig_sol_num_analytic} shows there is a good
agreement between the analytic and numerical solutions of
the two models for two types of surfaces $f$ (I and II).
\begin{figure}[htbp]
  \centering
  \begin{tabular}{cc}
    \includegraphics[width=0.4\linewidth]{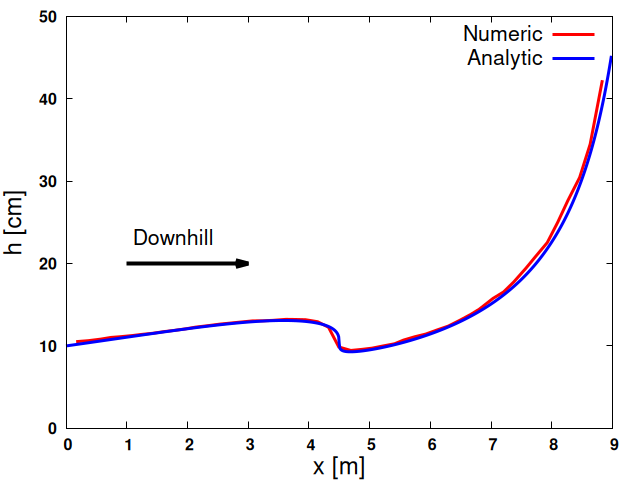}
    &\includegraphics[width=0.4\linewidth]{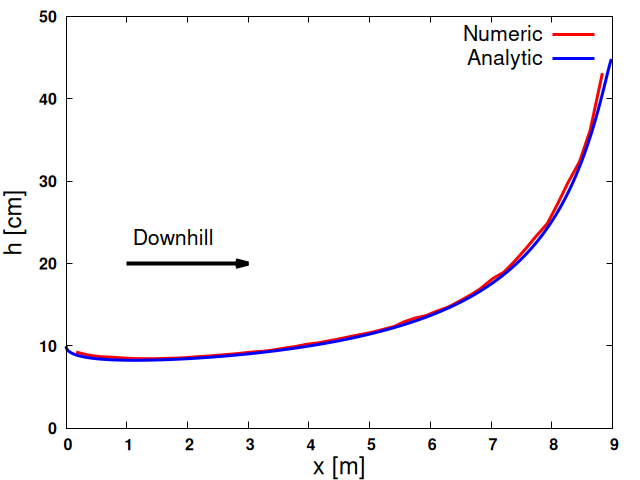}\\
    \includegraphics[width=0.4\linewidth]{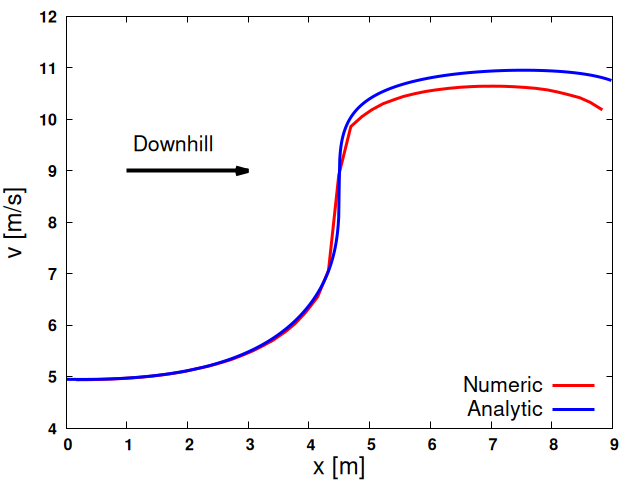}
    &\includegraphics[width=0.4\linewidth]{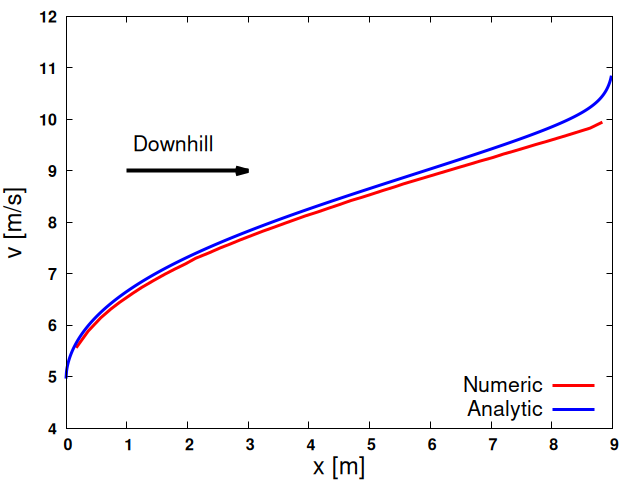}
  \end{tabular}
  \caption{Comparison between the analytic solution of E-SWE
    and the numerical solution of S-SWE for the water flow
    on surface I (left column) and water flow on surface II
    (right column).}
  \label{fig_sol_num_analytic}
\end{figure}
The numerical solution of S-SWE (\ref{s-swe}) is obtained by
applying the method described in \cite{sds-apnum}.  The
finite volume method uses a triangular mesh (obtained by
using a quality mesh generator (see \cite{shewchuk96b,
  shewchuk} for details)) on a 2D domain.  In all the
applications with radial symmetric surfaces (I, II and III)
we consider here, the 2D domain is an annulus
\begin{equation*}
  D:=\{(r,\theta)\left| \, r\in[r_0,r_1], \, \theta\in[0,2\pi)\right.\}.
\end{equation*}
and the boundary conditions are of Dirichlet type for the
upper part and free discharge for the bottom part of the
surface:
\begin{equation}
  h(t,\boldsymbol{x})=h_0=0.1, \quad 
  \boldsymbol{v}(t,\boldsymbol{x})=-v_0\boldsymbol{n},
  \quad \forall\boldsymbol{x}\in\partial D^{\rm in},
  \label{numerics-bc}
\end{equation}
where $\partial D^{\rm in}$ is the upper boundary of the
surface, $\boldsymbol{n}$ is the external oriented unit
normal vector of $\partial D^{\rm in}$ at $\boldsymbol{x}$,
and \mbox{$v_0=5\sqrt{0.1g}$}.

\section{Comments and Conclusions}
The E-SWE model can be seen as an intermediate model between
the S-SWE model and a mode general model where the
hydrostatic approximation of the pressure field is not
enforced.  The E-SWE model tries to be more realistic by
taking into account the variation of the normal direction to
the surface.  From the analysis presented in this article,
we can conjecture that {\it for a moderate variation of the
  surface curvature, the gradient of the surface is
  sufficient to find solutions that catch the variation of
  surface curvature}, see Fig.~\ref{fig_efectele_curburii}.

We can also make a remark concerning the physical
signification of the shallow water S-SWE model: the good
agreement between the solutions of the two models (see
Fig.~\ref{fig_upperbound} and the remarks from
Section~\ref{section3}) suggest us to think the velocity
$\boldsymbol{v}$ of the S-SWE model as vector in the tangent
plane to the surface and $h$ as water depth in the normal
direction to the surface.

Finally, note that for an adequate utilization of the
\mbox{S-SWE} model in real life problems, one must specify a
proper reference surface for the variables.



\section*{Acknowledgment}
Some of this work was partially supported by a grant of
Romanian Ministry of Research and Innovation,
CCCDI-UEFISCDI, project number
PN-III-P1-1.2-PCCDI-2017-0721/34PCCDI/2018 within PNCDI III.

\bibliographystyle{elsarticle-num}
\bibliography{eco_SWE}

\end{document}